\documentclass[10pt,conference]{IEEEtran}
\usepackage{newtxtext}
\usepackage{cite}
\usepackage{amsmath,amssymb,amsthm,mathrsfs}
\usepackage{graphicx}
\usepackage{booktabs}
\usepackage{tikz}
\usepackage{balance}
\usepackage{geometry}
\usetikzlibrary{arrows.meta,positioning}
\makeatletter
\renewcommand\small{\@setfontsize\small{10}{11.5}}
\renewcommand\footnotesize{\@setfontsize\footnotesize{10}{11.5}}
\g@addto@macro\normalsize{%
  \setlength\abovedisplayskip{2.5pt}%
  \setlength\belowdisplayskip{2.5pt}% 
  \setlength\abovedisplayshortskip{1.5pt}%
  \setlength\belowdisplayshortskip{1.5pt}}
\makeatother

\makeatletter
\renewcommand\section{\@startsection{section}{1}{\z@}%
  {1.0ex plus 1.0ex minus 0.4ex}%
  {0.5ex plus 0.8ex minus 0ex}%
  {\normalfont\normalsize\centering\scshape}}
\renewcommand\subsection{\@startsection{subsection}{2}{\z@}%
  {1.0ex plus 1.0ex minus 0.4ex}%
  {0.5ex plus 0.5ex minus 0ex}%
  {\normalfont\normalsize\itshape}}
\makeatother

\makeatletter
\newtheoremstyle{ieeetight}
  {2pt plus 1pt minus 1pt}
  {2pt plus 1pt minus 1pt}
  {\itshape}
  {}
  {\normalfont\bfseries}
  {.}
  {0.5em}
  {}
\theoremstyle{ieeetight}
\newtheorem{theorem}{Theorem}
\newtheorem{proposition}{Proposition}
\newtheorem{corollary}{Corollary}
\newtheorem{lemma}{Lemma}
\newtheorem{definition}{Definition}
\makeatother

\newcommand{\E}{\mathbb E}
\newcommand{\R}{\mathbb R}
\newcommand{\T}{\mathsf T}
\newcommand{\Var}{\operatorname{Var}}

\newcommand{\tr}{\operatorname{tr}}
\newcommand{\rank}[1]{\operatorname{rank}\!\left( #1 \right)}
\newcommand{\spn}{\operatorname{span}}
\newcommand{\defeq}{\triangleq}
\newcommand{\cA}{\mathcal A}
\newcommand{\cX}{\mathcal X}
\newcommand{\cS}{\mathcal S}
\newcommand{\cY}{\mathcal Y}
\newcommand{\cF}{\mathcal F}

\newcommand{\cZ}{\mathcal Z}
\newcommand{\cV}{\mathcal V}
\newcommand{\cW}{\mathcal W}
\newcommand{\lpmi}{\mathfrak i}
\newcommand{\pim}{\pi_{\mathsf M}}
\newcommand{\pic}{\pi_{\mathsf C}}
\newcommand{\pii}{\pi_{\mathsf I}}
\newcommand{\Hs}{\mathscr H}
\newcommand{\MSE}{\mathsf{MSE}}
\newcommand{\md}[1]{\zeta_{\leq #1}}

\newcommand{\spc}[1]{\cF_{#1}}
\newcommand{\spcn}[2]{\cF_{#1}^{#2}}
\newcommand{\spcs}[2]{\cF_{#1|#2}}
\newcommand{\spcsn}[3]{\cF_{#1|#2}^{#3}}
\newcommand{\proj}[2]{\Pi\!\left(#1;\, #2\right)}
\newcommand{\norm}[1]{\left\| #1 \right\|}
\newcommand{\orthp}{\mathbin{\boxplus}} %
\newcommand{\orthm}{\mathbin{\boxminus}} %
\newcommand{\fb}{\bar{f}}
\newcommand{\gb}{\bar{g}}

\newcommand{\La}{\Lambda}
\newcommand{\bbnorm}[1]{\left\| #1 \right\|}
\renewcommand{\mid}{\mathrel{\mkern-2mu\vert\mkern-2mu}}

\begin{document}
% Activate the IEEEtran.bst control entry (suppresses URL/DOI in references);
% must appear before the first \cite so the settings apply to all entries.
\bstctlcite{IEEEtran:BSTcontrol}
\title{
Universal Distributed Feature Extraction Complementary to Side Information via H-Score
}

\author{
\IEEEauthorblockN{Yuan Xinjie}
\IEEEauthorblockA{\textit{Tsinghua Shenzhen International} \\
\textit{Graduate School, Tsinghua University}\\
Shenzhen, China \\
yuanxj23@mails.tsinghua.edu.cn}
\and
\IEEEauthorblockN{Dier Tang}
\IEEEauthorblockA{\textit{Department of Mathematics} \\
\textit{The University of Hong Kong}\\
Hong Kong, China \\
tangde\_math@connect.hku.hk}
\and
\IEEEauthorblockN{Shao-Lun Huang}
\IEEEauthorblockA{\textit{Tsinghua Shenzhen International} \\
\textit{Graduate School, Tsinghua University}\\
Shenzhen, China \\
twn2gold@gmail.com}
}

\maketitle

\begin{abstract}
This paper considers a distributed estimation system where edge nodes transmit averaged features to a center node for parameter estimation. A key challenge arises when the center node possesses auxiliary side information: how to design local extractors that are universally optimal for any downstream task, while avoiding redundancy with the side information. To tackle this problem, we orthogonally decompose the joint function space and demonstrate in the weak-dependence regime that the modal decomposition of the canonical dependence kernel's conditional-additive component yields optimal, task-independent features. We construct a distributed nested H-score to learn this decomposition and show that, under a limited feature budget, the mean-squared error (MSE) gap to the best task-specific features is determined solely by the task's alignment with the discarded, least informative modes, and vanishes to leading order for every square-integrable task once the budget is sufficient.
\end{abstract}
\begin{IEEEkeywords}
distributed estimation, side information, universal feature extraction, local information geometry, nested H-score
\end{IEEEkeywords}

\section{Introduction}\label{sec:intro}

Distributed systems like integrated sensing and communication (ISAC)~\cite{liu_isac_2022,ahmadipour_information_2024,xinjie_unbounded_2025}, IoT, and sensor networks typically comprise edge nodes and a center node. Due to rate and energy constraints, edge nodes transmit block averages of low-dimensional features rather than raw observations. The center node then estimates latent parameters, often aided by side information unavailable at the edge.

For such systems, a primary question is how to design local feature extraction mappings and central estimators. This has long been studied in multi-terminal inference \cite{tenney_detection_1981,ahlswede_hypothesis_1986,han_statistical_1998}, where local messages are optimized for a fixed task. However, existing designs face two limitations. First, task-oriented designs~\cite{shao_learning_2022,gunduz_beyond_2023}, e.g., supervised learning with a mean-squared-error (MSE) loss, are optimal only for the training task and must be redesigned whenever the task changes. Second, applying universal features \cite{xu_information_2021b,tang_sharp_2026} directly is sub-optimal with side information $S$: local features capturing information already in $S$ are redundant, wasting communication resources. A universal feature-extraction scheme is thus desired that is complementary to $S$ and independent of the specific task.

Two further ingredients inform our design. On the physical layer, computation over multiple-access channels \cite{nazer_computation_2007} and over-the-air aggregation \cite{gastpar_uncoded_2008,zhu_broadband_2020} directly deliver sums of node-side preprocessings---an additive structure we formalize at the level of statistical dependence; on the learning side, the H-score objective underlying our approach originates from the centralized Soft-HGR framework \cite{wang_efficient_2019}. None of the above lines, however, meets both requirements at once.

In this paper, we design such universal features from an information-geometric perspective, building upon the feature geometry framework\cite{xu_neural_2024b}. Operating in the weak-dependence regime, we first orthogonally decompose the joint function space into a Markov subspace, a conditional subspace, and an interaction subspace. By projecting the canonical dependence kernel (CDK) onto them, we successfully isolate the dependence that is neither captured by the side information nor invisible to distributed implementations. We then show that a distributed nested H-score simultaneously realizes this subspace decomposition and the modal decomposition of the conditional-additive component, whose left modes provide the desired task-independent local features. Finally, we derive the center-node estimator based on these features and characterize its exact leading-order MSE gap to the best task-specific features: the gap is determined by the task's alignment with the discarded modes and vanishes for every square-integrable task once all modes are retained.

\section{System Model and Preliminaries}\label{sec:model}
\subsection{System Model}\label{subsec:system}
\begin{figure}[!t]
\centering
\resizebox{0.99\columnwidth}{!}{%
\begin{tikzpicture}[
        >=Stealth,
        font=\sffamily\small,
        subnode/.style={
            rectangle, draw, thick,
            minimum width=1.7cm, minimum height=0.68cm,
            text centered, rounded corners=2pt,
            fill=orange!15
        },
        centernode/.style={
            rectangle, draw, thick,
            minimum width=1.9cm, minimum height=2.4cm,
            text centered, rounded corners=3pt,
            align=center,
            fill=green!12
        },
        arrow/.style={-Stealth, thick},
    ]

    % ===== Left: Parameters =====
    \node[draw, densely dashed, rounded corners=8pt, fill=blue!5,
            text width=3.5cm, align=center, inner sep=3pt] (params) at (-2.5, 4.5)
            {Slow-varying $Y$,\\ side information $S$:\\[0.1em] $(Y,S) \sim P_{Y,S}$};

    \node[draw, densely dashed, rounded corners=8pt, fill=blue!5,
            text width=3.5cm, align=center, inner sep=3pt] (datagen) at (-2.5, 2.5)
            {$X_i^K \mid Y,S \overset{\text{iid}}{\sim} P_{X^K|Y,S}$\\$i=1,\ldots,N$};

    % ===== Nodes =====
    \node[subnode] (N1) at (2.2, 4.4) {Node 1};
    \node[subnode] (N2) at (2.2, 3.55) {Node 2};
    \node[font=\footnotesize] at (2.2, 3.12) {$\vdots$};
    \node[font=\footnotesize] at (0.6, 3.2) {$\vdots$};
    \node[font=\footnotesize] at (4.6, 3.2) {$\vdots$};
    \node[subnode] (NK) at (2.2, 2.5) {Node $K$};

    % ===== Center node =====
    \node[centernode] (CN) at (9.0, 3.45) {\textbf{Center}\\\textbf{Node}\\[0.2em] {\footnotesize\textcolor{black}{(knows $S$)}}};

    % ===== Data arrows =====
    \draw[arrow] (-0.6, 4.4) -- (N1.west)
        node[above, midway, font=\footnotesize] {$\{X_i^{(1)}\}_{i=1}^N$};
    \draw[arrow] (-0.6, 3.55) -- (N2.west)
        node[above, midway, font=\footnotesize] {$\{X_i^{(2)}\}_{i=1}^N$};
    \draw[arrow] (-0.6, 2.5) -- (NK.west)
        node[above, midway, font=\footnotesize] {$\{X_i^{(K)}\}_{i=1}^N$};

    % ===== Feature arrows =====
    \draw[arrow] (N1.east) -- (CN.west |- N1.east)
        node[above, midway, font=\footnotesize] {$U^{(1)}=\frac{1}{N}\sum_{i=1}^{N}f^{(1)}(X_{i}^{(1)})$};
    \draw[arrow] (N2.east) -- (CN.west |- N2.east)
        node[above, midway, font=\footnotesize] {$U^{(2)}=\frac{1}{N}\sum_{i=1}^{N}f^{(2)}(X_{i}^{(2)})$};
    \draw[arrow] (NK.east) -- (CN.west |- NK.east)
        node[above, midway, font=\footnotesize] {$U^{(K)}=\frac{1}{N}\sum_{i=1}^{N}f^{(K)}(X_{i}^{(K)})$};

    % ===== Output arrow =====
    \draw[arrow] (CN.east) -- (11.6, 3.45)
        node[above, midway, font=\footnotesize] {$\hat{\psi}(U, S)$};

\end{tikzpicture}%
}
\caption{System model: node $k$ transmits the block average $U^{(k)}$ of its local features, and the center node, which alone knows $S$, forms the estimate $\hat{\psi}(U,S)$.}
\label{fig:system-model}
\end{figure}

We consider a distributed estimation system comprising $K$ nodes and a single center node, as illustrated in Fig.~\ref{fig:system-model}, where the objective is to estimate a target function $\psi(Y)$ of a latent parameter $Y$ at the center node, based on locally extracted features and a global side information variable $S$ available exclusively at the center node. Let $Y \in \cY$ and $S \in \cS$ denote the latent parameter and the side information. We adopt a block-constant model wherein $(Y, S)$ remains invariant over blocks of $N$ time slots, with $N$ large enough, and is treated as fixed in the analysis below. Let $X^K = (X^{(1)}, \dots, X^{(K)})^{\T} \in \cX^{K} \defeq \cX^{(1)} \times \cdots \times \cX^{(K)}$ denote the joint observation, governed by a joint distribution $P_{X^{K}, S, Y}$. Conditioned on $(Y,S) = (y,s)$, the nodes observe $N$ independent and identically distributed (i.i.d.) samples $\{X_i^K\}_{i=1}^{N}$ from $P_{X^K|Y=y,S=s}$, and each node $k \in \{1, \dots, K\}$ observes only its local component sequence $\{X_i^{(k)}\}_{i=1}^{N}$.  

Subject to communication constraints, the nodes cannot transmit raw observations. Instead, each node $k$ applies a local feature mapping $f^{(k)} : \cX^{(k)} \to \R^{d}$, where $d$ denotes the per-node feature dimension budget, and transmits the block average
\begin{equation}
U^{(k)} = \frac{1}{N}\sum_{i=1}^{N} f^{(k)}(X_i^{(k)}) \in \R^{d}.
\label{eq:block-average}
\end{equation}
The center node concatenates the transmissions into the stacked vector $U = (U^{(1)\T}, \dots, U^{(K)\T})^{\T} \in \R^{Kd}$ and forms an estimate $\hat{\psi}(U, S)$ of $\psi(Y)$ as an arbitrary measurable function of $(U,S)$, measured by the MSE $\E[(\hat{\psi}(U, S) - \psi(Y))^2]$. Without loss of generality, all local feature mappings $\{f^{(k)}\}$ and the target function $\psi$ are assumed zero-mean, since constant offsets can be absorbed into the center estimator. This paper addresses the joint design problem
\begin{equation}
\min_{\substack{f^{(k)} \in \spcn{\cX^{(k)}}{d},\, k \in [K]\\ \hat\psi:\, \R^{Kd}\times\cS \to \R}} \E\bigl[(\hat{\psi}(U, S) - \psi(Y))^2\bigr],
\label{eq:design-problem}
\end{equation}
where $[K] \defeq \{1,\dots,K\}$ and $\spcn{\cZ}{d}$ denotes the set of $d$-dimensional square-integrable vector functions on $\cZ$ (formalized below). Our goal is feature designs that are \emph{universal}, i.e., independent of the task $\psi$, yet simultaneously optimal for all tasks, and \emph{complementary to the side information}, i.e., spending no feature dimensions on dependence already explained by $S$.

\textit{Remark:} This model captures applications such as ISAC, where edge nodes (e.g., dual-functional terminals) observe signals $X_i^{(k)}$ from the base station, and the center node combines side information $S$ (e.g., ambient temperature) with $U$ to estimate a target function $\psi(Y)$ (e.g., target velocity) rather than the high-dimensional $Y$ itself. Transmitting block averages also meets the low-latency requirements of ISAC, which preclude buffering long sequences, and aligns with gradient aggregation in distributed learning.

\subsection{Feature Geometry and Canonical Dependence Kernels}\label{subsec:geometry}
We work in the feature geometry on $\spc{\cX^K \times \cS \times \cY}$ with the metric distribution $R_{X^K, S, Y} \defeq P_{X^K} \, P_{S, Y}$. Here $\spc{\cZ} \defeq L^2(R_Z)$ denotes the space of square-integrable functions on a measurable space $\cZ$ under a probability measure $R_Z$, with inner product $\langle h_1, h_2 \rangle \defeq \E_{Z \sim R_Z}[h_1(Z) h_2(Z)]$ and norm $\norm{h} = \sqrt{\langle h, h\rangle}$; $\spcn{\cZ}{d} \defeq (\spc{\cZ})^d$. For $f \in \spc{\cX}$ and $g \in \spc{\cY}$ the tensor product is $(f \otimes g)(x, y) \defeq f(x) g(y)$; for vector-valued $f = (f_1,\dots,f_d)^{\T}$, $g = (g_1,\dots,g_d)^{\T}$, write $f \otimes g \defeq \sum_{i=1}^{d} f_i \otimes g_i$; for subspaces $\cV$, $\cW$, $\cV \otimes \cW \defeq \spn\{f \otimes g : f \in \cV, g \in \cW\}$. $\proj{h}{\cV}$ denotes the orthogonal projection of $h$ onto a closed subspace $\cV$ (elementwise for vector-valued $h$), $\cV \orthp \cW$ the orthogonal direct sum, and $\cV \orthm \cW$ the orthogonal complement of $\cW$ in $\cV$. A local function $f^{(k)} \in \spc{\cX^{(k)}}$ is identified with its embedding in $\spc{\cX^K}$, $x^K \mapsto f^{(k)}(x^{(k)})$.

Under the metric distribution, the induced metric on $\spc{\cX^K}$ is $P_{X^K}$ and that on $\spc{\cS \times \cY}$ is $P_{S, Y}$. Since conditional expectation is an orthogonal projection, the side space admits the decomposition $\spc{\cS \times \cY} = \spc{\cS} \orthp \spcs{\cY}{\cS}$, where
\begin{equation}
\spcs{\cY}{\cS} \defeq \{h \in \spc{\cS \times \cY} : \E[h(S, Y) \mid S] = 0\},
\label{eq:conditional-side-space}
\end{equation}
$\spc{\cS}$ is identified with its embedding $\{h(s,y) \equiv h(s)\}$, and $\spcsn{\cY}{\cS}{d} \defeq (\spcs{\cY}{\cS})^d$.

The \emph{joint CDK} of $(X^K; S, Y)$ is
\begin{equation}
\lpmi_{X^K; S, Y}(x^K, s, y) \defeq \frac{P_{X^K, S, Y}(x^K, s, y)}{P_{X^K}(x^K) P_{S, Y}(s, y)} - 1.
\label{eq:joint-cdk}
\end{equation}
To isolate the effect of the side information, we invoke the orthogonal decomposition of function spaces \cite[Prop.~22]{xu_neural_2024b}, which splits the joint CDK orthogonally into the \emph{Markov component} $\lpmi_{X^K; S} \in \spc{\cX^K} \otimes \spc{\cS}$ and the \emph{conditional dependence component} $\lpmi_{X^K; Y|S} \in \spc{\cX^K} \otimes \spcs{\cY}{\cS}$:
\begin{align}
\lpmi_{X^K; S, Y} &= \lpmi_{X^K; S} + \lpmi_{X^K; Y|S}, \label{eq:cdk-split}\\
\norm{\lpmi_{X^K; S, Y}}^2 &= \norm{\lpmi_{X^K; S}}^2 + \norm{\lpmi_{X^K; Y|S}}^2, \notag
\end{align}
where
\begin{align*}
\lpmi_{X^K; S}(x^K, s) &= \frac{P_{X^K|S=s}(x^K)}{P_{X^K}(x^K)} - 1,\\
\lpmi_{X^K; Y|S}(x^K, s, y) &= \frac{P_{X^K|S=s,Y=y}(x^K) - P_{X^K|S=s}(x^K)}{P_{X^K}(x^K)}.
\end{align*}

Throughout, we analyze the weak-dependence regime
\[
\sup_{x^K, s, y} \bigl| \lpmi_{X^K; S, Y}(x^K, s, y) \bigr| = O(\epsilon),
\]
which both components in \eqref{eq:cdk-split} inherit, as an average and a difference of the joint CDK.

Writing $\E[f\mid s] \defeq \E[f(X^K) \mid S=s]$ and $\E[f\mid s,y] \defeq \E[f(X^K)\mid S=s, Y=y]$, the following pairing identities, which connect the CDKs with conditional expectations, hold for any $f \in \spc{\cX^K}$ (and hence for every embedded local function $f^{(k)}$):
\begin{align}
\bigl\langle \lpmi_{X^K; S}(\cdot, s),\, f \bigr\rangle_{P_{X^K}} &= \E[f\mid s] - \E[f],
\label{eq:markov-repr}\\
\bigl\langle \lpmi_{X^K; Y|S}(\cdot, s, y),\, f \bigr\rangle_{P_{X^K}} &= \E[f\mid s,y] - \E[f\mid s].
\label{eq:conditional-repr}
\end{align}

\subsection{Modal Decomposition and H-Score}\label{subsec:modal}
Suppose $\gamma \in \spc{\cX \times \cY}$ admits a finite modal decomposition
\begin{equation}
\gamma = \sum_{i=1}^{\rank{\gamma}} \zeta_i(\gamma) = \sum_{i=1}^{\rank{\gamma}} \sigma_i \, (f_i^* \otimes g_i^*),
\label{eq:modal-decomposition}
\end{equation}
with singular values $\sigma_1 \geq \sigma_2 \geq \cdots > 0$, orthonormal modes $\langle f_i^*, f_j^* \rangle = \langle g_i^*, g_j^* \rangle = \delta_{ij}$, and $\rank{\gamma}$ the number of nonzero singular values. The truncation $\zeta_{\leq i}(\gamma) \defeq \sum_{j \leq i} \zeta_j(\gamma)$, with $\zeta_j(\gamma) \defeq 0$ for $j > \rank{\gamma}$, is the optimal rank-$i$ approximation, unique if $\sigma_i > \sigma_{i+1}$ (Eckart--Young--Mirsky)~\cite{xu_neural_2024b}. When $\gamma$ is the CDK of a pair of variables, the modes are their Hirschfeld--Gebelein--R\'enyi maximal correlation functions~\cite{xu_maximal_2020}.

The H-score for features $f \in \spcn{\cX}{d}$, $g \in \spcn{\cY}{d}$ is
\begin{align}
\Hs(f, g) &= \tfrac12\bigl(\norm{\lpmi_{X;Y}}^2 - \norm{\lpmi_{X;Y} - f \otimes g}^2\bigr) \label{eq:hscore}\\
&= \E_{P_{X,Y}}[f^{\T}(X)g(Y)] - \tfrac12 \tr(\La_f \La_g), \notag
\end{align}
with $\La_f = \E_{P_X}[f f^{\T}]$, $\La_g = \E_{P_Y}[g g^{\T}]$, where $\lpmi_{X;Y}$ denotes the CDK of a generic pair $(X;Y)$; its maximizers satisfy $f \otimes g = \zeta_{\leq d}(\lpmi_{X;Y})$, and its nested version sums the H-scores of the coordinate prefixes to yield the ordered modes with automatic orthogonality \cite{xu_neural_2024b}.

\section{Additive Geometry and Triple Orthogonal Decomposition}\label{sec:geometry}
In this section, we first characterize how the center node estimates with given features, which motivates the additive space, and then establish the orthogonal decomposition of the joint function space based on it.

\subsection{Estimation with Given Feature Extractors}\label{subsec:fixed}
Let $f \defeq (f^{(1)\T}, \dots, f^{(K)\T})^{\T}: \cX^K \to \R^{Kd}$ denote the stacked feature map, so that $U = \frac{1}{N}\sum_{i=1}^{N} f(X_i^K)$. Define the feature covariance matrix $\La_f \in \R^{Kd \times Kd}$, the conditional feature variation $\beta(y,s) \in \R^{Kd}$, and the task--feature correlation $c_{\psi}(s) \in \R^{Kd}$ as
\begin{align}
\La_f &\defeq \E_{P_{X^K}}[f f^{\T}], \quad
\beta(y,s) \defeq \E[f\mid s,y] - \E[f\mid s], \notag\\
c_{\psi}(s) &\defeq \E_{Y|S=s}[\psi(Y) \beta(Y, s)],
\label{eq:fixed-moments}
\end{align}
where $\La_f$ is assumed nonsingular on the retained feature span, which can always be arranged by removing redundant coordinates. Since $N$ is large, we model $U \mid (Y,S)=(y,s)$ as Gaussian $\mathcal{N}\bigl(\E[f\mid s,y], \tfrac1N \La_f\bigr)$, neglecting the $O(\epsilon)$ dependence of the covariance on $(s,y)$, whose effect on the posterior is a factor $\sqrt{N}$ smaller than that of the mean. We then have the following characterization; the proof is deferred to Appendix~\ref{app:proof-fixed}.

\begin{lemma}\label{lemma:optimal-estimator-fixed}
For any given local features $f$, the posterior distribution of $Y$ at the center node is approximated by
\begin{equation}
p_{Y|U,S}(y\mid u,s) = p_{Y|S}(y\mid s)\bigl[1 + N\,u^{\T}\La_f^{-1}\beta(y,s)\bigr] + O(\epsilon^2).
\label{eq:posterior-leading}
\end{equation}
Consequently, for any $\psi \in \spc{\cY}$, the conditional expectation of $\psi(Y)$ given $U$ and $S$ is approximated by
\begin{equation}
\E[\psi(Y) \mid U, S] = \E[\psi(Y) \mid S] + N \cdot c_{\psi}(S)^{\T} \La_f^{-1} \, U + O(\epsilon^2).
\label{eq:cond-expectation-leading}
\end{equation}
\end{lemma}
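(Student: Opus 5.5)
The plan is to apply Bayes' rule to the Gaussian model for $U\mid(Y,S)$ and linearize in $\epsilon$. Write $\mu(s) \defeq \E[f\mid s]$, so that $\E[f\mid s,y] = \mu(s) + \beta(y,s)$. By the pairing identity \eqref{eq:conditional-repr}, $\beta(y,s) = \langle \lpmi_{X^K;Y|S}(\cdot,s,y), f\rangle$. Hence $\beta = O(\epsilon)$ uniformly under the weak-dependence assumption. Moreover $\E_{Y|S=s}[\beta(Y,s)] = 0$, since $\E_{Y|S}\E[f\mid S,Y] = \E[f\mid S]$.

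Under the Gaussian model, the likelihood ratio against the reference point $\mu(s)$ is
\[
\frac{p_{U|Y,S}(u\mid y,s)}{\phi_s(u)} = \exp\!\Bigl(N(u-\mu(s))^{\T}\La_f^{-1}\beta - \tfrac N2 \beta^{\T}\La_f^{-1}\beta\Bigr),
\]
where $\phi_s$ is the $\mathcal N(\mu(s),\tfrac1N\La_f)$ density. Bayes' rule then gives
\[
p_{Y|U,S}(y\mid u,s) = \frac{p_{Y|S}(y\mid s)\,\ell(u,y,s)}{\E_{Y'|S=s}[\ell(u,Y',s)]},
\]
where $\ell$ is this likelihood ratio.

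Next I expand $\ell = 1 + N(u-\mu(s))^{\T}\La_f^{-1}\beta + O(\epsilon^2)$. Because $\beta$ has zero conditional mean, the denominator equals $1 + O(\epsilon^2)$, and dividing yields \eqref{eq:posterior-leading} with $u-\mu(s)$ in place of $u$.

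To reach the stated form with $u$ itself, I use two observations. First, $\mu(s) = \E[f] + \langle \lpmi_{X^K;S}(\cdot,s), f\rangle = O(\epsilon)$ by \eqref{eq:markov-repr}, since $f$ is zero-mean. Therefore the term $N\mu(s)^{\T}\La_f^{-1}\beta$ is a product of two $O(\epsilon)$ factors and is absorbed into $O(\epsilon^2)$. Second, $u$ concentrates around $\mu+\beta$ at scale $N^{-1/2}$, so $N u^{\T}\La_f^{-1}\beta$ is of order $\epsilon(\sqrt N + N\epsilon)$.

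This bookkeeping of how $N$ and $\epsilon$ scale together is the main obstacle. I would handle it by stating the regime explicitly, with $N\epsilon$ treated as $O(1)$ (or $\sqrt N\epsilon$ small as the local-geometry convention requires). All $O(\epsilon^2)$ remainders, including the quadratic term $\tfrac N2\beta^{\T}\La_f^{-1}\beta$ and the second-order exponential terms, are then understood within that same convention, matching the paper's stated approximation order. I would keep $u$ as a typical realization so that the exponent stays bounded and the Taylor remainder is controlled uniformly.

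Finally, I integrate $\psi(y)$ against \eqref{eq:posterior-leading}:
\[
\E[\psi(Y)\mid U=u,S=s] = \E[\psi(Y)\mid s] + N u^{\T}\La_f^{-1}\E_{Y|S=s}[\psi(Y)\beta(Y,s)] + O(\epsilon^2).
\]
The middle expectation is $c_\psi(s)$, and since $\La_f^{-1}$ is symmetric this gives \eqref{eq:cond-expectation-leading}. Square-integrability of $\psi$, together with the uniform $O(\epsilon)$ bound on $\beta$, ensures the remainder integrates to $O(\epsilon^2)$.
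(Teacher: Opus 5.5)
Your proposal is correct and follows the paper's own route in Appendix~\ref{app:proof-fixed}: Bayes' rule on the Gaussian likelihood, a first-order expansion in $\epsilon$ using $\beta=O(\epsilon)$ and $\E[f\mid s]=O(\epsilon)$, then integrating $\psi$ against the linearized posterior; you are in fact more explicit than the paper in showing that the normalizer is $1+O(\epsilon^2)$ because $\E_{Y|S=s}[\beta(Y,s)]=0$, a point the paper leaves implicit in its ``$\propto$''. One caution on your regime remark: absorbing $N\mu(s)^{\T}\La_f^{-1}\beta$ and $\tfrac N2\beta^{\T}\La_f^{-1}\beta$ into $O(\epsilon^2)$ requires $N$ (and a bound on $u$) to be held fixed inside the $O(\cdot)$, as the paper implicitly does, so your option ``$N\epsilon=O(1)$'' would leave these terms at order $N\epsilon^2=O(\epsilon)$; the consistent reading is that all remainders are $O(N\epsilon^2)$, which is $O(\epsilon^2)$ for fixed $N$.
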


Thus, under weak dependence, both the posterior of $Y$ and the conditional expectation of $\psi(Y)$ are linear in $U$. Hence, for given extractors $f$, the minimum mean-squared error (MMSE) estimator of $\psi(Y)$ at the center node is the leading-order conditional expectation 
\begin{equation}
      \begin{aligned}
            \hat{\psi}^*(U,S;f) = \E[\psi(Y) \mid S] + N \cdot c_{\psi}(S)^{\T} \La_f^{-1} U,
      \end{aligned}
      \label{eq:task-specific-estimator}
\end{equation}
and the corresponding achievable MSE is
\begin{align}
      % \small
      & \MSE_\psi(f) \defeq  \min_{\hat{\psi}}\E\bigl[(\hat\psi(U,S) - \psi(Y))^2\bigr]   \label{eq:mse-fixed-features}\\
      &\quad  =   \E[\Var(\psi(Y) \mid S)] - N \cdot \E\bigl[c_{\psi}(S)^{\T} \La_f^{-1} c_{\psi}(S)\bigr] + o(\epsilon^2).\notag
\end{align}
% \begin{equation}
%       \small
%       \begin{aligned}
%             \MSE_\psi(f) \defeq& \min_{\hat{\psi}}\E\bigl[(\hat\psi(U,S) - \psi(Y))^2\bigr] \\
%             =& \E[\Var(\psi(Y) \mid S)] - N \cdot \E_{S}\bigl[c_{\psi}(S)^{\T} \La_f^{-1} c_{\psi}(S)\bigr] + o(\epsilon^2),
%       \end{aligned}
%       \label{eq:mse-fixed-features}
% \end{equation}
Alternatively, by defining the \emph{task image}
\begin{equation}
    \begin{aligned}
          \bar{\psi}(x^K, s) \defeq \E_{Y|S=s}\bigl[\psi(Y)\, \lpmi_{X^K;Y|S}(x^K, s, Y)\bigr] \in \spc{\cX^K \times \cS},
          \label{eq:bar-psi-def}
    \end{aligned}
\end{equation}
the MSE admits the projection form
\begin{equation}
      \begin{aligned}
      &\MSE_\psi(f) = \E[\Var(\psi(Y) \mid S)] \\
      & \qquad - N \cdot \bigl\|\proj{\bar{\psi}}{\spn\{f\} \otimes \spc{\cS}}\bigr\|_{P_{X^K} P_S}^2 + o(\epsilon^2),
      \end{aligned}
      \label{eq:mse-projection}
\end{equation}
where $\spn\{f\} = \spn\{f_i : i \in [Kd]\} \subseteq \spc{\cX^K}$ denotes the span of all stacked components; see Appendix~\ref{app:proof-projection}.

Since the center node estimates by linearly combining the received features, the performance of a feature design depends only on the subspace spanned by the local features. This motivates the \emph{additive space}
\begin{equation}
\cA \defeq \spc{\cX^{(1)}} + \cdots + \spc{\cX^{(K)}} = \overline{\textstyle\bigcup_{f} \spn\{f\}} \;\subseteq\; \spc{\cX^K},
\label{eq:additive-space}
\end{equation}
where the local spaces are embedded in $\spc{\cX^K}$ as above, and the closure is taken over all finite-dimensional feasible subspaces. Moreover, for any side space $\cV \subseteq \spc{\cS \times \cY}$, $\sum_{k=1}^{K} \bigl(\spc{\cX^{(k)}} \otimes \cV\bigr) = \cA \otimes \cV \subseteq \spc{\cX^K \times \cS \times \cY}$.

\subsection{Triple Orthogonal Decomposition}\label{subsec:triple}
Combining the orthogonal decomposition of the $X^K$-side, $\spc{\cX^K} = \cA \orthp \bigl(\spc{\cX^K} \orthm \cA\bigr)$, with the side-space decomposition \eqref{eq:conditional-side-space}, the full space splits into three mutually orthogonal subspaces:
\begin{equation}
\begin{split}
\spc{\cX^K \times \cS \times \cY} \;=\; & \underbrace{\bigl(\cA \otimes \spc{\cS}\bigr)}_{\text{Markov subspace}} \;\orthp\; \underbrace{\bigl(\cA \otimes \spcs{\cY}{\cS}\bigr)}_{\text{conditional subspace}} \\
& \orthp\; \underbrace{\bigl((\spc{\cX^K} \orthm \cA) \otimes \spc{\cS \times \cY}\bigr)}_{\text{interaction subspace}}.
\end{split}
\label{eq:three-subspaces}
\end{equation}
The first two pieces form the \emph{accessible subspace} $\cA \otimes \spc{\cS \times \cY}$ realizable under the communication constraint; the third piece is invisible to any distributed implementation. Projecting the joint CDK onto these three subspaces gives the following decomposition.

\begin{definition}\label{def:components}
The \emph{Markov-additive}, \emph{conditional-additive}, and \emph{interaction components} of the joint CDK are its projections onto the three subspaces in \eqref{eq:three-subspaces}:
\begin{equation*}
\begin{aligned}
\pim &\defeq \proj{\lpmi_{X^K; S, Y}}{\cA \otimes \spc{\cS}} = \proj{\lpmi_{X^K; S}}{\cA \otimes \spc{\cS}}, \\
\pic &\defeq \proj{\lpmi_{X^K; S, Y}}{\cA \otimes \spcs{\cY}{\cS}} = \proj{\lpmi_{X^K; Y|S}}{\cA \otimes \spcs{\cY}{\cS}}, \\
\pii &\defeq \lpmi_{X^K; S, Y} - \pim - \pic.
\end{aligned}
% \label{eq:def-components}
\end{equation*}
\end{definition}

The alternative expressions of $\pim$ and $\pic$ are due to the split \eqref{eq:cdk-split}: $\lpmi_{X^K; Y|S} \perp \cA \otimes \spc{\cS}$ and $\lpmi_{X^K; S} \perp \cA \otimes \spcs{\cY}{\cS}$. Moreover, $\pii = \proj{\lpmi_{X^K; S, Y}}{\bigl(\spc{\cX^K} \orthm \cA\bigr) \otimes \spc{\cS \times \cY}}$. By the orthogonality of \eqref{eq:three-subspaces}, the three components are mutually orthogonal, i.e.,
\begin{equation}
      \begin{aligned}
            \lpmi_{X^K;S,Y} &= \pim + \pic + \pii, \\
            \bbnorm{\lpmi_{X^K; S, Y}}^2 &= \bbnorm{\pim}^2 + \bbnorm{\pic}^2 + \bbnorm{\pii}^2. 
            \label{eq:pythagorean}
      \end{aligned}
\end{equation}

Intuitively, $\pim$ is dependence that the center node already knows through $S$, hence useless for estimating $\psi(Y)$; $\pic$ is the dependence on $Y$ beyond $S$ that additive features can extract, hence crucial for estimation; and $\pii$ is invisible to any distributed implementation.

\section{Modal Decomposition with Nested H-Score}\label{sec:main}
This section first shows how the modal decomposition of the two accessible components is obtained by a distributed nested H-score, and then develops the center-node estimator and its MSE gap.

\subsection{Deriving the Modal Decomposition via Nested H-Scores}\label{subsec:nested}
Since $\pii$ is invisible to any distributed implementation, we decompose only $\pim$ and $\pic$, with feature functions of dimensions $\bar{d}$ and $\tilde{d}$:
\begin{equation}
\fb \in \cA^{\bar{d}}, \quad \gb \in \spcn{\cS}{\bar{d}}, \qquad
\tilde{f} \in \cA^{\tilde{d}},\quad \tilde{g} \in \spcn{\cS \times \cY}{\tilde{d}},
\label{eq:def-fb-tilde}
\end{equation}
where $\cA^{d}$ denotes the set of $d$-dimensional vector functions whose coordinates lie in $\cA$, i.e., $\fb = \sum_{k=1}^{K} \fb^{(k)}$ with $\fb^{(k)} \in \spcn{\cX^{(k)}}{\bar{d}}$, and similarly for $\tilde{f}$. The left branches are subject to the distributed (additive) constraint, whereas the right branches, realized at the center node during offline training, are unconstrained. Following the nesting technique of \cite{xu_neural_2024b}, we define $\Hs_{\mathrm{MC}}$ as the sum of two H-scores:
\begin{equation}
\Hs_{\mathrm{MC}}\biggl( \begin{bmatrix}\fb\\ \tilde{f}\end{bmatrix},  \begin{bmatrix}\gb\\ \tilde{g}\end{bmatrix}  \biggr)  \;\defeq\; \Hs(\fb,\gb) + \Hs\biggl( \begin{bmatrix}\fb\\ \tilde{f}\end{bmatrix},  \begin{bmatrix}\gb\\ \tilde{g}\end{bmatrix}  \biggr),
\label{eq:hscore-mc}
\end{equation}
where the first H-score is between $(X^K;S)$ and the second between $(X^K; S, Y)$, both under the metric distribution $R_{X^K,S,Y}$. The corresponding nested H-score network structure is shown in Fig.~\ref{fig:nested-hscore-network}.

\begin{figure}[!t]
\centering
\includegraphics[width=0.95\columnwidth]{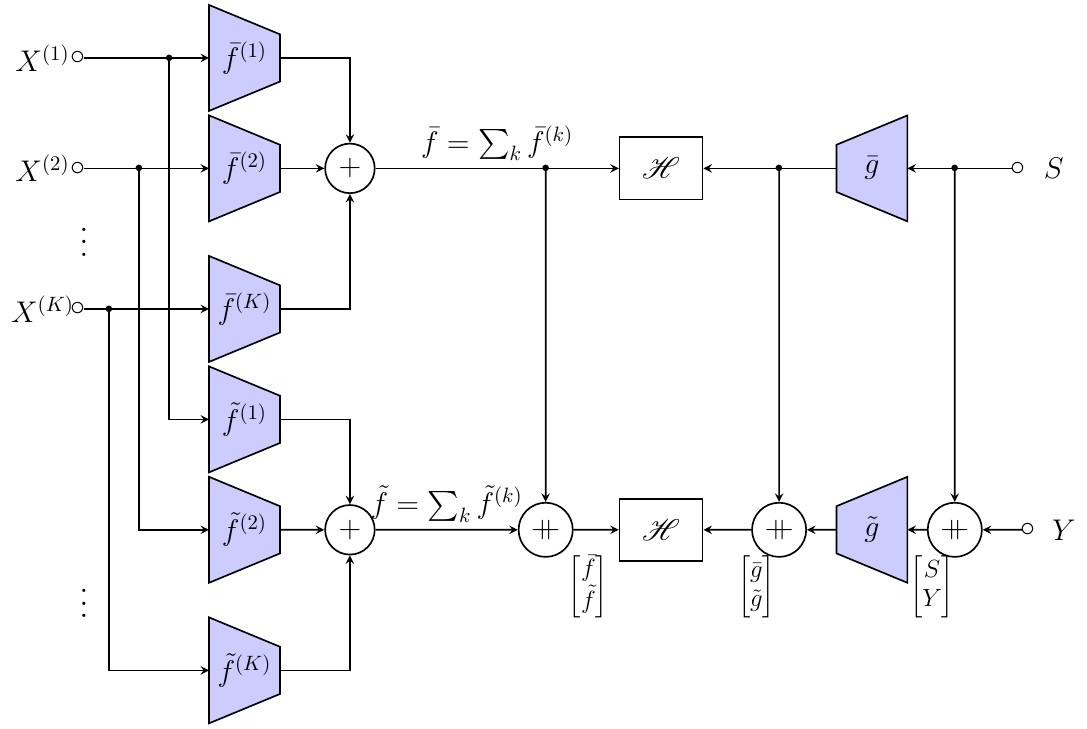}
\caption{The nested H-score network for the modal decomposition of the Markov-additive component $\pim$ and the conditional-additive component $\pic$.}
\label{fig:nested-hscore-network}
\vspace{-1ex}
\end{figure}

Maximizing $\Hs_{\mathrm{MC}}$ over $\fb, \gb, \tilde{f}, \tilde{g}$ yields the modal decomposition of both components, as formalized below (proof in Appendix~\ref{app:proof-nested}).

\begin{proposition}\label{prop:modal-decomposition}
For $\fb,\gb, \tilde{f}, \tilde{g}$ defined in \eqref{eq:def-fb-tilde} and $\bar{d} \geq \rank{\pim}$, the score \eqref{eq:hscore-mc} is maximized if and only if
\begin{equation}
\fb \otimes \gb = \pim, \quad
% \tilde{f} \otimes \tilde{g} = \md{\tilde{d}} (\pic),
 \tilde{f} \otimes \tilde{g} =  \begin{cases}
   \md{\tilde{d}} (\pic), & \tilde{d}<\rank{\pic},\\    \pic, & \tilde{d} \geq \rank{\pic}.
   \end{cases}
\label{eq:modal-decomposition-condition}
\end{equation}
\end{proposition}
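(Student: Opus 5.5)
Rewrite both H-scores in the distance form \eqref{eq:hscore}. Then maximizing $\Hs_{\mathrm{MC}}$ is equivalent to minimizing
\[
L \defeq \norm{\lpmi_{X^K;S} - \fb\otimes\gb}^2 + \norm{\lpmi_{X^K;S,Y} - \fb\otimes\gb - \tilde f\otimes\tilde g}^2 .
\]
Here $\fb\otimes\gb\in\cA\otimes\spc{\cS}$ and $\tilde f\otimes\tilde g\in\cA\otimes\spc{\cS\times\cY}$. The first H-score is taken for the pair $(X^K;S)$, whose CDK is $\lpmi_{X^K;S}$. Its metric $P_{X^K}P_S$ coincides with the marginal of $R_{X^K,S,Y}$, so both terms can be read as norms in the same space $\spc{\cX^K\times\cS\times\cY}$, with $\spc{\cS}$ embedded.

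\textbf{Step 1 (project out what the features cannot reach).} By the triple decomposition \eqref{eq:three-subspaces} and Pythagoras \eqref{eq:pythagorean}, the second term equals
\[
\norm{\pii}^2 + \norm{\pim + \pic - \fb\otimes\gb - \tilde f\otimes\tilde g}^2 .
\]
Similarly, $\lpmi_{X^K;S} - \pim$ is orthogonal to $\cA\otimes\spc{\cS}$, so the first term is $\text{const} + \norm{\pim - \fb\otimes\gb}^2$.

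Next, split $\tilde g = \tilde g_S + \tilde g_{Y|S}$ with $\tilde g_S \defeq \E[\tilde g\mid S]\in\spcn{\cS}{\tilde d}$ and $\tilde g_{Y|S}\in\spcsn{\cY}{\cS}{\tilde d}$. This gives $\tilde f\otimes\tilde g = \tilde f\otimes\tilde g_S + \tilde f\otimes \tilde g_{Y|S}$, with the two pieces lying in the Markov and conditional subspaces respectively. Applying Pythagoras once more, $L$ equals a constant plus
\[
\norm{\pim - \fb\otimes\gb}^2 + \norm{\pim - \fb\otimes\gb - \tilde f\otimes\tilde g_S}^2 + \norm{\pic - \tilde f\otimes\tilde g_{Y|S}}^2 .
\]

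\textbf{Step 2 (lower bound and achievability).} The third term is a rank-$\tilde d$ approximation of $\pic$ with left factor in $\cA^{\tilde d}$. Since $\pic\in\cA\otimes\spcs{\cY}{\cS}$, its left singular functions lie in $\cA$. Eckart--Young--Mirsky restricted to the closed subspace $\cA\otimes\spcs{\cY}{\cS}$ therefore shows that this term is at least $\sum_{i>\tilde d}\sigma_i^2(\pic)$, with equality iff $\tilde f\otimes\tilde g_{Y|S} = \md{\tilde d}(\pic)$, or $=\pic$ when $\tilde d\ge\rank{\pic}$. Uniqueness requires the usual assumption $\sigma_{\tilde d}>\sigma_{\tilde d+1}$, which the statement tacitly uses.

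The first two terms are nonnegative. Because $\bar d\ge\rank{\pim}$ and the left modes of $\pim$ lie in $\cA$, both vanish simultaneously by taking $\fb\otimes\gb=\pim$ and $\tilde g_S=0$. This attains the global lower bound, so the minimizers are exactly the feature tuples at which all three terms attain their minima.

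\textbf{Step 3 (characterize the maximizers; main obstacle).} At a maximizer, the first term must vanish, giving $\fb\otimes\gb=\pim$. The second term then forces $\tilde f\otimes\tilde g_S=0$, hence $\tilde f\otimes\tilde g = \tilde f\otimes\tilde g_{Y|S}$, which by Step 2 is the claimed truncation. The converse direction follows because any such configuration attains the bound.

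The delicate point is this converse in the presence of coupling. One must show that any configuration with $\tilde f\otimes\tilde g_S\neq 0$ is strictly suboptimal. This holds because the two Markov terms cannot both vanish unless $\tilde f\otimes\tilde g_S=0$: if both vanish, their difference $\tilde f\otimes\tilde g_S$ is zero. This is precisely why the nesting, which adds $\Hs(\fb,\gb)$ in front, is needed.

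A secondary technical point is justifying Eckart--Young--Mirsky with the constrained left factor $\tilde f\in\cA^{\tilde d}$. To handle this, I would project any $\tilde f$ onto $\cA$, which is already satisfied, and note that the unconstrained optimum already lies in $\cA\otimes\spcs{\cY}{\cS}$, so the constraint is inactive.
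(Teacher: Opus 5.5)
Your proposal is correct and follows essentially the same route as the paper. Both rewrite the two H-scores in distance form, discard $\pii$ and the part of $\lpmi_{X^K;S}$ outside $\cA\otimes\spc{\cS}$, and split $\tilde f\otimes\tilde g$ into its Markov and conditional parts, arriving at the same three-term objective $\norm{\pim-\bar\gamma}^2+\norm{\pim-\bar\gamma-\tilde\gamma_{\mathsf M}}^2+\norm{\pic-\tilde\gamma_{\mathsf C}}^2$ whose terms are minimized simultaneously. Your remark that the ``only if'' direction needs $\sigma_{\tilde d}>\sigma_{\tilde d+1}$ (or must be read as ``equals some optimal rank-$\tilde d$ truncation'') is a fair caveat that the paper's proof leaves implicit.
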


\subsection{Estimation with the Modal Decomposition}\label{subsec:estimation}
Let $\pic = \sum_{i=1}^{r} \sigma_i\, (f_i^* \otimes g_i^*)$, $r \defeq \rank{\pic}$, be the modal decomposition \eqref{eq:modal-decomposition} of $\pic$. For a given budget $\tilde d$, let $\tilde{f}^* = \sum_{k=1}^{K} \tilde{f}^{(k)*} \in \cA^{\tilde{d}}$ and $\tilde{g}^* \in \spcsn{\cY}{\cS}{\tilde{d}}$ be maximizers of $\Hs_{\mathrm{MC}}$, so that, by Proposition~\ref{prop:modal-decomposition},
\begin{equation}
\tilde{f}^* \otimes \tilde{g}^* = \begin{cases}
\md{\tilde{d}}(\pic), & \tilde{d} < r,\\
\pic, & \tilde{d} \geq r.
\end{cases}
\label{eq:modal-decomposition-pic}
\end{equation}
Each node $k$ uses $f^{(k)} = \tilde{f}^{(k)*}$ (so $d = \tilde d$), and the center node uses the aggregate
\[
V \defeq \sum_{k=1}^{K} U^{(k)} = \frac{1}{N}\sum_{i=1}^{N} \tilde f^*(X_i^K) \in \R^{\tilde d},
\]
which over-the-air aggregation~\cite{gastpar_uncoded_2008,zhu_broadband_2020} delivers directly. Lemma~\ref{lemma:optimal-estimator-fixed} applies verbatim to the feature map $\tilde f^*$, with $\tilde\beta(y,s) \defeq \E[\tilde f^*\mid s,y] - \E[\tilde f^*\mid s] = \langle \tilde f^*, \lpmi_{X^K;Y|S}(\cdot,s,y)\rangle_{P_{X^K}}$ by \eqref{eq:conditional-repr}. Since the coordinates of $\tilde f^*$ lie in $\cA$, only $\pic$ contributes to this inner product; moreover, $\pic - \tilde f^*\otimes\tilde g^* = \sum_{i>\tilde d}\sigma_i (f_i^*\otimes g_i^*)$ is orthogonal to the coordinates of $\tilde f^*$, whose span is $\spn\{f_1^*,\dots,f_{\tilde d}^*\}$ when $\tilde d < r$. Hence
\begin{equation}
\tilde\beta(y,s) = \La_{\tilde f^*}\,\tilde g^*(y,s), \qquad \La_{\tilde f^*} \defeq \E\bigl[\tilde f^* \tilde f^{*\T}\bigr],
\label{eq:lambda-beta}
\end{equation}
which gives the following characterization of the distributed estimator.

\begin{proposition}\label{prop:distributed-posterior}
With the local feature extractors $\tilde f^{(k)*}$, the posterior distribution and the conditional expectation at the center node are
\begin{align}
p_{Y|V,S}(y\mid v,s) &= p_{Y|S}(y\mid s)\bigl[ 1 + N v^{\T} \tilde{g}^{*}(y,s) \bigr] + O(\epsilon^2),
\label{eq:distributed-posterior}\\
\E[\psi(Y) \mid V, S] &= \E[\psi(Y) \mid S] + N \cdot V^{\T} \tilde{c}_{\psi}(S) + O(\epsilon^2),
\label{eq:distributed-expectation}
\end{align}
where $\tilde{c}_{\psi}(S) \defeq \E\bigl[\tilde{g}^{*}(Y,S)\,\psi(Y) \,\big|\, S\bigr]$. Hence the MMSE estimator of $\psi$ in the leading order is
\begin{equation}
\hat{\psi}^{*}(V,S) = \E[\psi(Y) \mid S] + N \cdot V^{\T} \tilde{c}_{\psi}(S),
\label{eq:optimal-estimator-distributed}
\end{equation}
and its MSE for any task $\psi \in \spc{\cY}$ is
\begin{equation}
\begin{split}
\MSE_{\psi}^{\mathrm{univ}}
=\;& \E\bigl[\Var(\psi(Y)\mid S) \bigr] \\
&- N \,\E\bigl[\tilde{c}_{\psi}(S)^{\T} \La_{\tilde{f}^{*}}  \tilde{c}_{\psi}(S) \bigr] + o(\epsilon^2).
\end{split}
\label{eq:mse-univ-prop4}
\end{equation}
\end{proposition}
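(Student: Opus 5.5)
The plan is to apply Lemma~\ref{lemma:optimal-estimator-fixed} with the feature map $\tilde f^*$ in place of the stacked $f$, and then simplify using the identity \eqref{eq:lambda-beta}.

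First, the aggregate $V$ is a block average of $\tilde f^*(X_i^K)$. So $V \mid (Y,S)=(y,s)$ is modeled, exactly as in Section~\ref{subsec:fixed}, as Gaussian $\mathcal N\bigl(\E[\tilde f^*\mid s,y], \tfrac1N\La_{\tilde f^*}\bigr)$. We first ensure $\La_{\tilde f^*}$ is nonsingular. When $\tilde d<r$, the span of the coordinates of $\tilde f^*$ is $\spn\{f_1^*,\dots,f_{\tilde d}^*\}$. Since $\tilde f^*\otimes\tilde g^* = \md{\tilde d}(\pic)$, we can write $\tilde f^* = A\,(f_1^*,\dots,f_{\tilde d}^*)^{\T}$ with $A$ invertible, so $\La_{\tilde f^*}=AA^{\T}\succ 0$. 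When $\tilde d\ge r$, we first remove redundant coordinates as allowed in the lemma.

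Lemma~\ref{lemma:optimal-estimator-fixed} then gives
\[
p_{Y|V,S}(y\mid v,s)=p_{Y|S}(y\mid s)\bigl[1+N v^{\T}\La_{\tilde f^*}^{-1}\tilde\beta(y,s)\bigr]+O(\epsilon^2).
\]
Substituting $\tilde\beta=\La_{\tilde f^*}\tilde g^*$ from \eqref{eq:lambda-beta} cancels the inverse covariance and yields \eqref{eq:distributed-posterior}.

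Next, we integrate $\psi(y)$ against this posterior. The zeroth-order term gives $\E[\psi(Y)\mid S]$. The first-order term gives $N\,v^{\T}\E[\tilde g^*(Y,s)\psi(Y)\mid S=s]=N v^{\T}\tilde c_\psi(s)$, which is \eqref{eq:distributed-expectation}. Equivalently, one can note that $c_\psi(s)=\E[\psi(Y)\tilde\beta(Y,s)\mid s]=\La_{\tilde f^*}\tilde c_\psi(s)$ and plug this into \eqref{eq:cond-expectation-leading}. The conditional expectation is the MMSE estimator, so \eqref{eq:optimal-estimator-distributed} follows up to $O(\epsilon^2)$.

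For the MSE, we substitute $c_\psi=\La_{\tilde f^*}\tilde c_\psi$ into \eqref{eq:mse-fixed-features}:
\[
c_\psi^{\T}\La_{\tilde f^*}^{-1}c_\psi=\tilde c_\psi^{\T}\La_{\tilde f^*}\tilde c_\psi,
\]
which gives \eqref{eq:mse-univ-prop4}.

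To be self-contained, we can also verify the MSE directly. Write the error as $\bigl(\E[\psi\mid S]-\psi\bigr)+N V^{\T}\tilde c_\psi$ and expand the square.
\begin{itemize}
\item The cross term is $N\,\E[\tilde c_\psi(S)^{\T}\tilde\beta(Y,S)(-\psi(Y))]$, since $\E[V\mid Y,S]-\E[V\mid S]=\tilde\beta$. This equals $-N\,\E[\tilde c_\psi^{\T}\La_{\tilde f^*}\tilde c_\psi]$.
\item The quadratic term is $N^2\,\E[\tilde c_\psi^{\T}VV^{\T}\tilde c_\psi]$. Here $VV^{\T}$ has conditional covariance part $\tfrac1N\La_{\tilde f^*}$ plus a mean part. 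This gives $N\,\E[\tilde c_\psi^{\T}\La_{\tilde f^*}\tilde c_\psi]$ plus terms of order $N^2\|\E[\tilde f^*\mid S]\|^2\|\tilde c_\psi\|^2$.
\end{itemize}
Combining the two terms gives the net $-N\,\E[\tilde c_\psi^{\T}\La_{\tilde f^*}\tilde c_\psi]$.

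The main obstacle is bookkeeping of orders. $\tilde c_\psi$ is $O(\epsilon)$ because $\tilde g^*$ inherits the scale of $\pic$. The mean $\E[\tilde f^*\mid S]$, however, is driven by the Markov component. Its effect must either be absorbed by centering, which is how the lemma handles it via $\E[\psi\mid S]$ and the $S$-dependent offset, or be shown to be $o(\epsilon^2)$. Because \eqref{eq:lambda-beta} and the lemma are given, the cleanest route is to rely on \eqref{eq:mse-fixed-features}, whose remainder already accounts for this, rather than redo the expansion.
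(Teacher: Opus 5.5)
Your proposal is correct and follows the paper's own proof: apply Lemma~\ref{lemma:optimal-estimator-fixed} to $\tilde f^*$, substitute $\La_{\tilde f^*}^{-1}\tilde\beta=\tilde g^*$ from \eqref{eq:lambda-beta} into the posterior and conditional-expectation formulas, and substitute $c_\psi=\La_{\tilde f^*}\tilde c_\psi$ into \eqref{eq:mse-fixed-features}. In your optional direct check, the cross term of the expanded square needs a factor $2$, i.e., it equals $-2N\,\E[\tilde c_\psi^{\T}\La_{\tilde f^*}\tilde c_\psi]$; as you wrote it, adding the $+N\,\E[\tilde c_\psi^{\T}\La_{\tilde f^*}\tilde c_\psi]$ quadratic term would give zero rather than the claimed net, and the mean contribution you flag is only $O(N^2\epsilon^4)$, so it is absorbed in the $o(\epsilon^2)$ remainder.
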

\begin{proof}
Substitute $\La_{\tilde f^*}^{-1}\tilde\beta = \tilde g^*$ from \eqref{eq:lambda-beta} into \eqref{eq:posterior-leading} and \eqref{eq:cond-expectation-leading}, and substitute $c_{\psi} = \La_{\tilde f^*}\tilde{c}_{\psi}$, which follows from \eqref{eq:lambda-beta}, into \eqref{eq:mse-fixed-features}.
\end{proof}

Once the modes are learned offline, e.g., by maximizing the empirical nested H-score, the node-side maps $\tilde{f}^{(k)*}$ are fixed and task-independent: when the task $\psi$ changes, only the coefficients $\tilde{c}_{\psi}$ and the baseline $\E[\psi(Y)\mid S]$ are recomputed at the center node, without retraining the nodes.

\subsection{Optimality and Universality of the Feature Extractors}\label{subsec:universal}
Let $\MSE_{\psi}^{*} \defeq \inf_{f} \MSE_\psi(f)$ be the best MSE achievable by task-specific extractors of arbitrary dimension. Since $\spn\{f\}$ ranges over all finite-dimensional subspaces of $\cA$, \eqref{eq:mse-projection} and the closure characterization \eqref{eq:additive-space} give
\begin{equation}
\begin{split}
\MSE_{\psi}^{*} =\;& \E[\Var(\psi(Y)\mid S)] \\
&- N \cdot \bigl\|\proj{\bar{\psi}}{\cA \otimes \spc{\cS}}\bigr\|_{P_{X^K} P_S}^2 + o(\epsilon^2),
\end{split}
\label{eq:single-task-optimal-mse-proj}
\end{equation}
attained when $\spn\{f\}$ contains the left singular space of $\proj{\bar{\psi}}{\cA \otimes \spc{\cS}}$. Comparing \eqref{eq:single-task-optimal-mse-proj} with \eqref{eq:mse-univ-prop4} gives the gap.

\begin{theorem}\label{thm:universality-gap}
For any budget $\tilde d$ and any task $\psi \in \spc{\cY}$, the universal features in \eqref{eq:modal-decomposition-pic} satisfy
\begin{equation}
\MSE_{\psi}^{\mathrm{univ}} - \MSE_{\psi}^{*} = N \sum_{i=\tilde{d}+1}^{r} \sigma_i^2\, \E\bigl[\rho_{\psi,i}(S)^2\bigr] + o(\epsilon^2),
\label{eq:mse-gap}
\end{equation}
where $\rho_{\psi,i}(s) \defeq \E[g_i^*(Y,s)\psi(Y)\mid S=s]$ is the conditional correlation of the task with the $i$-th right mode.
\end{theorem}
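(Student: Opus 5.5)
Both MSE expressions, \eqref{eq:single-task-optimal-mse-proj} and \eqref{eq:mse-univ-prop4}, share the baseline $\E[\Var(\psi(Y)\mid S)]$, so the gap is
\[
N\Bigl(\bigl\|\proj{\bar\psi}{\cA\otimes\spc{\cS}}\bigr\|^2-\E\bigl[\tilde c_\psi(S)^{\T}\La_{\tilde f^*}\tilde c_\psi(S)\bigr]\Bigr).
\]
The plan is to write both terms in the modal coordinates of $\pic$ and to show they agree on the first $\tilde d$ modes.

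\emph{Step 1: identify the projected task image.} For any $a\in\cA$ and $h\in\spc{\cS}$, the definition \eqref{eq:bar-psi-def} together with Fubini gives
\[
\langle \bar\psi, a\otimes h\rangle_{P_{X^K}P_S}
=\E_{S,Y}\bigl[\psi(Y)h(S)\,\langle \lpmi_{X^K;Y|S}(\cdot,S,Y),a\rangle\bigr].
\]
Since $a\in\cA$ and $\psi\otimes h$ (recentered given $S$, which is harmless because $\lpmi_{X^K;Y|S}$ has zero conditional mean) lies in the conditional side space, the inner product $\langle \lpmi_{X^K;Y|S},a\rangle$ may be replaced by $\langle\pic,a\rangle$. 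Substituting $\pic=\sum_{i=1}^r\sigma_i f_i^*\otimes g_i^*$, I expect
\[
\proj{\bar\psi}{\cA\otimes\spc{\cS}}(x^K,s)=\sum_{i=1}^r\sigma_i f_i^*(x^K)\,\rho_{\psi,i}(s).
\]
Orthonormality of the $f_i^*$ under $P_{X^K}$ then gives
\[
\bigl\|\proj{\bar\psi}{\cA\otimes\spc{\cS}}\bigr\|^2=\sum_{i=1}^r\sigma_i^2\,\E\bigl[\rho_{\psi,i}(S)^2\bigr].
\]
To justify that this function is indeed the projection, note that it lies in $\cA\otimes\spc{\cS}$ and has the correct inner products with every $a\otimes h$. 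Such elements are dense, and $r$ is finite.

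\emph{Step 2: compute the universal term.} The product $\tilde f^*\otimes\tilde g^*$ is invariant under $\tilde f\mapsto A\tilde f$, $\tilde g\mapsto A^{-\T}\tilde g$, and the quadratic form $\tilde c_\psi^{\T}\La_{\tilde f^*}\tilde c_\psi$ transforms invariantly under the same map. So without loss of generality take $\tilde f^*_i=f_i^*$ and $\tilde g_i^*=\sigma_i g_i^*$ for $i\le\min(\tilde d,r)$, with extra coordinates zero or removed when $\tilde d\ge r$. This choice is permitted because the maximizer is characterized only through $\tilde f^*\otimes\tilde g^*$ in \eqref{eq:modal-decomposition-pic}, and the estimator and MSE depend only on that product via \eqref{eq:lambda-beta}. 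Then $\La_{\tilde f^*}=I$ and $\tilde c_{\psi,i}(S)=\sigma_i\rho_{\psi,i}(S)$, which gives
\[
\E\bigl[\tilde c_\psi(S)^{\T}\La_{\tilde f^*}\tilde c_\psi(S)\bigr]=\sum_{i\le\min(\tilde d,r)}\sigma_i^2\,\E\bigl[\rho_{\psi,i}(S)^2\bigr].
\]

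\emph{Step 3: conclude.} Subtracting the two sums leaves exactly the tail $N\sum_{i=\tilde d+1}^r\sigma_i^2\E[\rho_{\psi,i}(S)^2]$, which is empty when $\tilde d\ge r$. The $o(\epsilon^2)$ remainders combine directly.

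The main obstacle is Step 1, namely the rigorous replacement of $\lpmi_{X^K;Y|S}$ by $\pic$ inside the projection. This requires checking that, as a function of $(s,y)$, $\psi(y)h(s)-\E[\psi(Y)\mid S=s]\,h(s)$ lies in $\spcs{\cY}{\cS}$, so that only the component of $\lpmi_{X^K;Y|S}$ in $\cA\otimes\spcs{\cY}{\cS}$ survives. It also requires attention to the mixed metrics: the pairing defining $\bar\psi$ uses $P_{Y|S}$, while the space uses $P_{X^K}P_S$. The invariance claim in Step 2 is routine linear algebra.
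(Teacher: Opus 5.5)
Your proposal is correct and follows the paper's proof essentially step for step. Step 1 is the paper's identification $\proj{\bar\psi}{\cA\otimes\spc{\cS}}=\sum_{i=1}^{r}\sigma_i\,\rho_{\psi,i}\,f_i^*$: you verify it weakly against test tensors $a\otimes h$, while the paper splits $\lpmi_{X^K;Y|S}=\pic+\pi_{\mathsf{IC}}$, notes the projection annihilates the $\pi_{\mathsf{IC}}$ part of $\bar\psi$, and observes the $\pic$ part already lies in $\cA\otimes\spc{\cS}$. Step 3 is identical.

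The only real difference is in Step 2. Instead of normalizing the factorization by $GL(\tilde d)$-invariance, the paper uses the basis-free identity $\tilde f^{*\T}(x^K)\,\tilde c_\psi(s)=\E\bigl[(\tilde f^*\otimes\tilde g^*)(x^K,s,Y)\,\psi(Y)\bigm| S=s\bigr]$ together with $\tilde c_\psi^{\T}\La_{\tilde f^*}\tilde c_\psi=\bigl\|\tilde f^{*\T}\tilde c_\psi\bigr\|^2$. This shows directly that the term depends only on the product $\tilde f^*\otimes\tilde g^*$, so every factorization is handled uniformly, including the degenerate ones for $\tilde d>r$ that your normalization has to treat by padding or removing coordinates.
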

\begin{proof}
Write $\lpmi_{X^K;Y|S} = \pic + \pi_{\mathsf{IC}}$ with $\pi_{\mathsf{IC}} \defeq \proj{\lpmi_{X^K;Y|S}}{(\spc{\cX^{K}} \orthm \cA) \otimes \spcs{\cY}{\cS}}$. In \eqref{eq:bar-psi-def}, the contribution of $\pi_{\mathsf{IC}}$ lies in $(\spc{\cX^{K}} \orthm \cA) \otimes \spc{\cS}$ and is annihilated by the projection onto $\cA \otimes \spc{\cS}$, whereas that of $\pic$ already lies in $\cA \otimes \spc{\cS}$. Hence
\begin{align*}
\proj{\bar{\psi}}{\cA \otimes \spc{\cS}}(x^K,s) &= \bigl\langle \pic(x^K, s, \cdot), \psi \bigr\rangle_{P_{Y|S=s}}\\
&= \textstyle\sum_{i=1}^{r} \sigma_i\, \rho_{\psi,i}(s)\, f_i^*(x^K),
\end{align*}
whose squared norm is $\sum_{i=1}^{r} \sigma_i^2\, \E[\rho_{\psi,i}(S)^2]$ by the orthonormality of $\{f_i^*\}$. Similarly, the definition of $\tilde c_\psi$ and \eqref{eq:modal-decomposition-pic} give
\begin{align*}
\tilde f^{*\T}(x^K)\,\tilde c_\psi(s) &= \E\bigl[(\tilde f^*\otimes\tilde g^*)(x^K,s,Y)\,\psi(Y) \bigm| S=s\bigr]\\
&= \textstyle\sum_{i=1}^{\min\{\tilde d, r\}} \sigma_i\, \rho_{\psi,i}(s)\, f_i^*(x^K),
\end{align*}
so that
\[
\E\bigl[\tilde{c}_{\psi}(S)^{\T} \La_{\tilde{f}^{*}} \tilde{c}_{\psi}(S)\bigr] = \sum_{i=1}^{\min\{\tilde d, r\}} \sigma_i^2\, \E[\rho_{\psi,i}(S)^2].
\]
Substituting into \eqref{eq:single-task-optimal-mse-proj} and \eqref{eq:mse-univ-prop4} and subtracting yields \eqref{eq:mse-gap}.
\end{proof}

Thus, the penalty of a limited budget is the task's $\sigma_i^2$-weighted energy on the discarded right modes. Since the nested H-score discards the modes with the smallest singular values, the loss is small unless the task concentrates on these least informative modes; in particular, any task with $\rho_{\psi,i} \equiv 0$ for all $i > \tilde d$ is estimated optimally even when $\tilde d < r$. When all modes are retained, the sum in \eqref{eq:mse-gap} is empty:

\begin{corollary}\label{cor:universality}
If $\tilde{d} \geq r$, then for every task $\psi \in \spc{\cY}$,
\[
\MSE_{\psi}^{\mathrm{univ}} = \MSE_\psi^{*} + o(\epsilon^2),
\]
i.e., the features are simultaneously optimal for all tasks.
\end{corollary}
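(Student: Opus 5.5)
The plan is to read the corollary off Theorem~\ref{thm:universality-gap} directly, and to double-check that nothing in its proof breaks down in the boundary regime $\tilde d \geq r$. Specialising \eqref{eq:mse-gap} to $\tilde d \geq r$ makes the index range $i=\tilde d+1,\dots,r$ empty. The sum is then zero, so $\MSE_{\psi}^{\mathrm{univ}} - \MSE_{\psi}^{*} = o(\epsilon^2)$ for every $\psi \in \spc{\cY}$.

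Because the choice of $\tilde f^*$ does not depend on $\psi$, the same extractors achieve this for all tasks at once. That simultaneity is the "universal" claim, and it follows because \eqref{eq:modal-decomposition-pic} is determined by $\pic$ alone.

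I will also verify the theorem's ingredients in this case rather than cite them blindly.

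First, \eqref{eq:modal-decomposition-pic} gives $\tilde f^*\otimes\tilde g^* = \pic$ by Proposition~\ref{prop:modal-decomposition}. Consequently
\[
\tilde f^{*\T}(x^K)\tilde c_\psi(s) = \sum_{i=1}^{r}\sigma_i\rho_{\psi,i}(s) f_i^*(x^K),
\]
which coincides with $\proj{\bar\psi}{\cA\otimes\spc{\cS}}$ as computed in the theorem's proof.

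Second, the spans of $\tilde f^*$ and $\{f_1^*,\dots,f_r^*\}$ coincide up to redundant coordinates when $\tilde d > r$. We remove those coordinates, as allowed after \eqref{eq:fixed-moments}, so that $\La_{\tilde f^*}$ is nonsingular. With that done, \eqref{eq:lambda-beta} still holds, because $\pic - \tilde f^*\otimes\tilde g^* = 0$ is trivially orthogonal to every coordinate. The two squared norms in \eqref{eq:single-task-optimal-mse-proj} and \eqref{eq:mse-univ-prop4} therefore agree.

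The only step I expect to need care is this degenerate case $\tilde d > r$. There the factorization $\tilde f^*\otimes\tilde g^*=\pic$ is not unique and $\La_{\tilde f^*}$ may be singular. I would handle it by passing to a basis of $\spn\{\tilde f^*\}$. Neither the MSE expression nor the projection depends on the choice of basis, so the equality of the two expressions is unaffected.
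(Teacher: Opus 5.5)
Your proposal is correct and follows the paper's route: the paper obtains the corollary by noting that the sum in \eqref{eq:mse-gap} is empty when $\tilde d \geq r$, and simultaneity holds because $\tilde f^*$ is determined by $\pic$ alone. One aside in your extra verification is inaccurate but harmless: for $\tilde d > r$, the span of $\tilde f^*$ need not equal $\spn\{f_1^*,\dots,f_r^*\}$ up to linearly dependent coordinates (a zero-mean coordinate $h \in \cA$ paired with a zero component of $\tilde g^*$ enlarges the span while keeping $\La_{\tilde f^*}$ nonsingular), yet nothing changes, because \eqref{eq:lambda-beta} follows from $\tilde f^*\otimes\tilde g^* = \pic$ alone and any span lying between $\spn\{f_1^*,\dots,f_r^*\}$ and $\cA$ yields the same projection of $\bar\psi$.
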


Since $\pim$ is absorbed by $(\fb,\gb)$, which are used only in training and never transmitted, no transmitted dimension is spent on dependence already explained by $S$. Without side information, i.e., for constant $S$, we have $\pim = 0$ and $\pic = \proj{\lpmi_{X^K;Y}}{\cA \otimes \spc{\cY}}$, so the scheme reduces to universal distributed features without side information.

\section{Conclusion}\label{sec:conclusion}
We studied universal local feature design for distributed estimation with side information at the center node. Under weak dependence, a triple orthogonal decomposition shows that only the conditional-additive component of the CDK is useful. Its modal decomposition, learned by a distributed nested H-score without knowledge of the task, yields features whose MSE gap to the best task-specific features is determined by the task's alignment with the discarded modes and vanishes when all modes are retained. We believe this offers a principled basis for reusable, side-information-aware feature design in distributed sensing.

\appendices

\section{Proof of Lemma~\ref{lemma:optimal-estimator-fixed}}\label{app:proof-fixed}
Under the Gaussian model of Section~\ref{subsec:fixed}, with $\mu_f(y,s) \defeq \E[f(X^K)\mid s,y]$,
\begin{align*}
& p_{Y|U,S}(y\mid u,s) \propto \\ & p_{Y|S}(y\mid s) \cdot \exp\Bigl( - \tfrac{N}{2}\bigl(u-\mu_f(y,s)\bigr)^{\T} \La_{f}^{-1} \bigl(u - \mu_f(y,s)\bigr)\Bigr).
\end{align*}
Expanding the quadratic form and using $\mu_f(y,s) = \E[f\mid s] + \beta(y,s)$ with $\beta(y,s) = O(\epsilon)$ and $\E[f\mid s] = O(\epsilon)$ (by \eqref{eq:markov-repr} and the weak-dependence bound), the $y$-dependent part of the exponent is $N\,u^{\T}\La_f^{-1}\beta(y,s) + O(\epsilon^2)$, since $\mu_f^{\T}\La_f^{-1}\mu_f = O(\epsilon^2)$. Exponentiating to first order gives \eqref{eq:posterior-leading}, and integrating $\psi(y)$ against it under $P_{Y|S=s}$ yields \eqref{eq:cond-expectation-leading}. For the MSE, the orthogonality principle conditioned on $S$ gives
\[
\MSE_\psi(f) = \E[\Var(\psi(Y)\mid S)] - \E\bigl[\Var(\E[\psi(Y)\mid U,S]\mid S)\bigr],
\]
where, by \eqref{eq:cond-expectation-leading},
\begin{align*}
&\Var(\E[\psi(Y)\mid U,S]\mid S)\\
&\quad= N^2 c_{\psi}(S)^{\T}\La_f^{-1}\,\E[UU^{\T}\mid S]\,\La_f^{-1}c_{\psi}(S) + o(\epsilon^2),
\end{align*}
with $\E[UU^{\T}\mid S] = \tfrac1N\La_f + O(\epsilon^2)$ (using $\E[U \mid S] = O(\epsilon)$ and the law of total variance). Substitution gives \eqref{eq:mse-fixed-features}.

\section{Proof of the Projection Form \eqref{eq:mse-projection}}\label{app:proof-projection}
Define $\eta(x^K, s) \defeq f(x^K)^{\T} \La_f^{-1} c_{\psi}(s) \in \spn\{f\} \otimes \spc{\cS}$, whose squared norm under $P_{X^K}P_S$ is $\E_{S}\bigl[ c_{\psi}(S)^{\T} \La_f^{-1} c_{\psi}(S) \bigr]$, precisely the correction term in \eqref{eq:mse-fixed-features}. It remains to show $\eta = \proj{\bar{\psi}}{\spn\{f\}\otimes \spc{\cS}}$; since the tensors $f_i \otimes c$ ($i \in [Kd]$, $c \in \spc{\cS}$) span this space, it suffices to verify $\langle \bar{\psi} - \eta, f_i \otimes c \rangle = 0$. By \eqref{eq:bar-psi-def}, Fubini, and \eqref{eq:conditional-repr},
\[
\E_{P_{X^K}}[\bar{\psi}(\cdot,s)\, f_i] = \E_{Y|S=s}\bigl[\psi(Y)\, \beta_i(Y,s)\bigr] = c_{\psi,i}(s),
\]
and $\E_{P_{X^K}}[\eta(\cdot,s) f_i] = (\La_f \La_f^{-1} c_{\psi}(s))_i = c_{\psi,i}(s)$. Hence $\eta$ satisfies the defining orthogonality of the projection, and substituting its norm into \eqref{eq:mse-fixed-features} proves \eqref{eq:mse-projection}.

\section{Proof of Proposition~\ref{prop:modal-decomposition}}\label{app:proof-nested}
Let $\bar\gamma \defeq \fb \otimes \gb$ and $\tilde\gamma \defeq \tilde{f} \otimes \tilde{g}$, and write $F \defeq (\fb^{\T}, \tilde f^{\T})^{\T}$, $G \defeq (\gb^{\T}, \tilde g^{\T})^{\T}$. Then $\bar\gamma \in \cA \otimes \spc{\cS}$, and the projections of $\tilde\gamma$ onto the subspaces in \eqref{eq:three-subspaces} are $\tilde\gamma_{\mathsf I} = 0$, $\tilde\gamma_{\mathsf M} = \sum_{j} \tilde{f}_j \otimes \proj{\tilde{g}_j}{\spc{\cS}}$, and $\tilde\gamma_{\mathsf C} = \sum_{j} \tilde{f}_j \otimes \proj{\tilde{g}_j}{\spcs{\cY}{\cS}}$, the latter two of rank at most $\tilde{d}$. Using the geometric form \eqref{eq:hscore} of the H-score and the orthogonality of \eqref{eq:cdk-split} and \eqref{eq:three-subspaces},
\begin{align}
\Hs(\fb,\gb) &= \tfrac12\norm{\pim}^2 - \tfrac12\norm{\pim - \bar\gamma}^2,
\label{eq:proof-nested-h1}\\
\Hs(F, G) &= \tfrac12\bigl(\norm{\pim}^2 + \norm{\pic}^2\bigr) \notag\\
&\quad - \tfrac12\Bigl[\norm{\pim - \bar\gamma - \tilde\gamma_{\mathsf M}}^2 + \norm{\pic - \tilde\gamma_{\mathsf C}}^2\Bigr],
\label{eq:proof-nested-h2}
\end{align}
where \eqref{eq:proof-nested-h1} uses $\lpmi_{X^K;S} - \pim \perp \pim - \bar\gamma$, and $\pii$ cancels in \eqref{eq:proof-nested-h2} since $\bar\gamma + \tilde\gamma$ is orthogonal to the interaction subspace. Hence maximizing $\Hs_{\mathrm{MC}}$ is equivalent to minimizing
\[
\norm{\pim - \bar\gamma}^2 + \norm{\pim - \bar\gamma - \tilde\gamma_{\mathsf M}}^2 + \norm{\pic - \tilde\gamma_{\mathsf C}}^2,
\]
whose three nonnegative terms attain their minima simultaneously iff $\bar\gamma = \pim$ (feasible since $\bar d \geq \rank{\pim}$), $\tilde\gamma_{\mathsf M} = 0$, and $\tilde\gamma_{\mathsf C} = \md{\tilde d}(\pic)$ (by Eckart--Young--Mirsky), i.e., iff \eqref{eq:modal-decomposition-condition} holds.

% ============================================================
% Per the ITW submission policy, the AI-use declaration and the
% references are the only content allowed on page 6 (the main
% text including appendices is limited to 5 pages).
% ============================================================

\section*{Declaration on the Use of Artificial Intelligence}
In this work, the authors used DeepSeek-V4-Pro to assist with the proofs, and used GLM-5.3 and Kimi-K3 to assist with polishing each section of the paper. All AI-assisted content was carefully reviewed and edited by authors, who take full responsibility for the integrity and accuracy of this publication.

\balance

\bibliographystyle{IEEEtran}
\bibliography{ref}

@IEEEtranBSTCTL{IEEEtran:BSTcontrol,
  CTLuse_url = "no",
  CTLuse_doi = "no"
}

@inproceedings{xinjie_unbounded_2025,
  title = {Unbounded {{Multi-Letter Marginal Gain}} for {{Bistatic Integrated Sensing}} and {{Communication}}},
  booktitle = {2025 {{Asia Pacific Workshop}} on {{Data Science}} and {{Information Theory}} ({{APWDSIT}})},
  author = {Yuan, Xinjie and Peng, Tianren and Fu, Haohao and Huang, Shao-Lun},
  year = 2025,
  month = oct,
  pages = {1--6},
  doi = {10.1109/APWDSIT64675.2025.11478556},
  url = {https://ieeexplore.ieee.org/abstract/document/11478556}
}

@article{xu_maximal_2020,
  title = {Maximal {{Correlation Regression}}},
  author = {Xu, Xiangxiang and Huang, Shao-Lun},
  year = 2020,
  journal = {IEEE Access},
  volume = {8},
  pages = {26591--26601},
  issn = {2169-3536},
  doi = {10.1109/ACCESS.2020.2971386},
  url = {https://ieeexplore.ieee.org/document/8979352/}
}

@article{xu_neural_2024b,
  title = {Neural {{Feature Learning}} in {{Function Space}}},
  author = {Xu, Xiangxiang and Zheng, Lizhong},
  year = 2024,
  journal = {Journal of Machine Learning Research},
  volume = {25},
  number = {142},
  pages = {1--76},
  issn = {1533-7928},
  url = {http://jmlr.org/papers/v25/23-1202.html}
}

@inproceedings{xu_information_2021b,
  title = {An {{Information Theoretic Framework}} for {{Distributed Learning Algorithms}}},
  booktitle = {2021 {{IEEE International Symposium}} on {{Information Theory}} ({{ISIT}})},
  author = {Xu, Xiangxiang and Huang, Shao-Lun},
  year = 2021,
  month = jul,
  pages = {314--319},
  publisher = {IEEE},
  address = {Melbourne, Australia},
  doi = {10.1109/ISIT45174.2021.9517827},
  url = {https://ieeexplore.ieee.org/document/9517827/},
  copyright = {https://doi.org/10.15223/policy-029},
  isbn = {978-1-5386-8209-8},
  langid = {english}
}

@article{tenney_detection_1981,
  title = {Detection with distributed sensors},
  author = {Tenney, Robert R. and Sandell, Jr., Nils R.},
  journal = {IEEE Transactions on Aerospace and Electronic Systems},
  volume = {AES-17},
  number = {4},
  pages = {501--510},
  year = {1981}
}

@article{ahlswede_hypothesis_1986,
  title = {Hypothesis testing with communication constraints},
  author = {Ahlswede, Rudolf and Csisz{\'a}r, Imre},
  journal = {IEEE Transactions on Information Theory},
  volume = {IT-32},
  number = {4},
  pages = {533--542},
  year = {1986}
}

@article{han_statistical_1998,
  title = {Statistical inference under multiterminal data compression},
  author = {Han, Te Sun and Amari, S.-I.},
  journal = {IEEE Transactions on Information Theory},
  volume = {44},
  number = {6},
  pages = {2300--2324},
  year = {1998},
  doi = {10.1109/18.720540}
}

@article{nazer_computation_2007,
  title = {Computation over multiple-access channels},
  author = {Nazer, Bobak and Gastpar, Michael},
  journal = {IEEE Transactions on Information Theory},
  volume = {53},
  number = {10},
  pages = {3498--3516},
  year = {2007}
}

@article{gastpar_uncoded_2008,
  title = {Uncoded transmission is exactly optimal for a simple {{Gaussian}} ``sensor'' network},
  author = {Gastpar, Michael},
  journal = {IEEE Transactions on Information Theory},
  volume = {54},
  number = {11},
  pages = {5247--5251},
  year = {2008}
}

@article{zhu_broadband_2020,
  title = {Broadband analog aggregation for low-latency federated edge learning},
  author = {Zhu, Guangxu and Wang, Yong and Huang, Kaibin},
  journal = {IEEE Transactions on Wireless Communications},
  volume = {19},
  number = {1},
  pages = {491--506},
  year = {2020}
}

@article{shao_learning_2022,
  title = {Learning task-oriented communication for edge inference: An information bottleneck approach},
  author = {Shao, Jiawei and Mao, Yuyi and Zhang, Jun},
  journal = {IEEE Journal on Selected Areas in Communications},
  volume = {40},
  number = {1},
  pages = {197--211},
  year = {2022}
}

@article{gunduz_beyond_2023,
  title = {Beyond transmitting bits: {{Context}}, semantics, and task-oriented communications},
  author = {G{\"u}nd{\"u}z, Deniz and others},
  journal = {IEEE Journal on Selected Areas in Communications},
  volume = {41},
  number = {1},
  pages = {5--41},
  year = {2023}
}

@inproceedings{wang_efficient_2019,
  title = {An efficient approach to informative feature extraction from multimodal data},
  author = {Wang, Lichen and Wu, Jiaxiang and Huang, Shao-Lun and Zheng, Lizhong and Xu, Xiangxiang and Zhang, Lin and Huang, Junzhou},
  booktitle = {Proceedings of the AAAI Conference on Artificial Intelligence},
  volume = {33},
  number = {1},
  pages = {5281--5288},
  year = {2019},
  doi = {10.1609/aaai.v33i01.33015281}
}

@article{liu_isac_2022,
  author  = {Liu, Fan and Cui, Yuanhao and Masouros, Christos and Xu, Jie and Han, Tony Xiao and Eldar, Yonina C. and Buzzi, Stefano},
  title   = {Integrated Sensing and Communications: Toward Dual-Functional Wireless Networks for {6G} and Beyond},
  journal = {IEEE Journal on Selected Areas in Communications},
  volume  = {40},
  number  = {6},
  pages   = {1728--1767},
  year    = {2022}
}

@article{ahmadipour_information_2024,
  author  = {Ahmadipour, Mehrasa and Kobayashi, Mari and Wigger, Mich{\`e}le and Caire, Giuseppe},
  title   = {An Information-Theoretic Approach to Joint Sensing and Communication},
  journal = {IEEE Transactions on Information Theory},
  volume  = {70},
  number  = {2},
  pages   = {1124--1146},
  year    = {2024}
}

@misc{tang_sharp_2026,
  title = {Sharp {{Regret Bounds}} and a {{Task-Covariance Correction}} for {{Spectral Representation Learning}}},
  author = {Tang, D. and Tan, J. Y. and Han, G.},
  howpublished = {arXiv:2609.15825},
  year = 2026,
  month = sep
}
\end{document}